\let\MYoriglatexcaption\caption
\renewcommand{\caption}[2][\relax]{\MYoriglatexcaption[#2]{#2}}
\begin{document}

%
\title{Parks: A Doubly Infinite Family of NP-Complete Puzzles and Generalizations of A002464}
%
%
%


\author{Igor~Minevich,
        Gabe~Cunningham,
        K.~Aditya~Karan,
        and~Joshua~V.~Gyllinsky
\thanks{Minevich, Cunningham, \& Gyllinsky were with the School of Computing \& Data Science, Wentworth Institute of Technology, Boston,
MA, 02115 USA}
\thanks{Karan is with EAi Technologies, Vienna, VA.}
\thanks{Primary e-mail: minevichi@wit.edu.}
\thanks{Manuscript received November 2, 2024.}}

\maketitle

\begin{abstract}
The Parks Puzzle is a paper-and-pencil puzzle game that is classically played on a square grid with different colored regions (the parks). 
The player needs to place a certain number of ``trees'' in each row, column, and park such that none are adjacent, even diagonally. 
We define a doubly-infinite family of such puzzles, the $(c, r)$-tree Parks puzzles, where there need be $c$ trees per column and $r$ per row. We then prove that for each $c$ and $r$ the set of $(c, r)$-tree puzzles is NP-complete. 
For each $c$ and $r$, there is a sequence of possible board sizes $m \times n$, and the number of possible puzzle solutions for these board sizes is a doubly-infinite generalization of OEIS sequence A002464, which itself describes the case $c = r = 1$. 
This connects the Parks puzzle to chess-based puzzle problems,
as the sequence describes the number of ways to place non-attacking kings on a chessboard so that there is exactly one in each column and row (i.e. to place non-attacking dragon kings in shogi).
These findings add yet another puzzle to the set of chess puzzles and expands the list of known NP-complete problems described.
\end{abstract}

\begin{IEEEkeywords}
Algorithm, chess puzzle, combinatorics, complexity theory, NP-Complete, OEIS, Parks, puzzle, sequence
\end{IEEEkeywords}

%
\IEEEpeerreviewmaketitle

\section{Introduction}
\label{sec:Intro}


The P versus NP problem is a major open problem in computer science, and one of the seven Millennium Problems posed by the Clay Mathematical Institute in the year 2000, with a \$1,000,000 prize for a solution. As of this writing, only one of the seven Millennium Problems has been solved \cite{devlin2002millennium, carlson2006millennium, PoincareConj}.
Despite decades of research into computational complexity, the current state of the theoretical understanding leaves
much to be desired. 
For a detailed exposition on the P vs. NP problem itself, as well as its place in computer science research, see \cite{gary, sipser, 10.1145/602382.602398}.

The problem of determining the complexity class of a computational problem is an important aspect of the study of complexity classes, of which P and NP have taken center stage due to the practical outcomes associated with their study. 
The study of the P and NP complexity classes is a very mature field. 
The study of puzzles in NP is also quite mature, with several decades worth of work related to identifying NP-complete puzzles, 
developing strategies for proving reductions, and advancing algorithmic techniques to create practical solvers for these puzzles \cite{ComplexityOfPuzzles, Hearn2009games}. 
A large number of common
puzzles such as Minesweeper \cite{Minesweeper2003}, Sudoku \cite{Kendall2008}, Kakuro \cite{Kakuro} and Yosenabe \cite{Yosenabe} have been proven NP-complete. A relatively large number of lesser known puzzles have also been shown to be NP-complete, such as
Light Up, Numberlink, and KPlumber 
\cite{KRAL2004473}. In 2012, Andrea Sabbatini released a cell phone app called ``100 Logic Games - Time Killers'' that contains 100 different types of puzzles, most of which have been proven to be NP-complete already: Tents \cite{TentsNPC}, Nurikabe \cite{NurikabeNPC}, Skyscrapers \cite{SkyscrapersNPC}, Battleships \cite{BattleshipsNPC}, Hitori, and Kakuro just to name a few; see \cite{ComplexityOfPuzzles} for a list of many such and more. Yet the authors are not aware of a rigorous proof of the NP-completeness of the Parks puzzle, at least not in its full generality (the senior thesis of the third author contains a proof in a particular case \cite{karan2020parks}).

The Parks Puzzle is a logic puzzle akin to Sudoku that has not to our knowledge been rigorously shown to be NP-complete, at least not in its full generalization. This puzzle challenges players to determine the placement of a certain number of trees in a grid, adhering to simple rules: place exactly $c$ trees in each column, exactly $r$ in each row, and exactly $r$ in each park, making sure no two trees are adjacent, not even diagonally (see Figures \ref{fig:1TreeParksEg}, \ref{fig:2TreeParksEg}, \ref{fig:21TreeParksEg}, and \ref{fig:12TreeParksEg}; the ``parks'' are recognized by their own distinct colors). This puzzle is growing in popularity via several mobile apps made by Andrea Sabbatini \cite{ParksSeasons, LogicGames, ParksCantica, ParksLandscapes}, yet the analysis of its difficulty and fascinating underlying combinatorics remains unstudied. 

We start by defining the Parks puzzles and giving a few examples in Section \ref{sec:Rules}. Notice that in so doing we are presenting a more general version of the puzzle than those in the aforementioned apps\footnote{The mobile apps listed above generalize the puzzle a bit less, with $(t, t)$-tree puzzles where we must place $t$ trees in each column, row, and park.}. This generalization presents new possibilities for creating interesting puzzles, and has wider mathematical and computer science implications. 

Then we review a bit of background on the P vs. NP Millennium Prize Problem and the 3-SAT logic problem; for a thorough understanding of the first problem, we refer the reader to \cite{gary} or \cite{sipser}. Next, we prove Parks is in NP (Section \ref{sec:ParksNP}), present a proof of the NP-completeness of the $(1, 1)$-tree puzzle in Section \ref{sec:11NPC}, and finally a proof of the NP-completeness of the general $(c, r)$-tree puzzle in Section \ref{sec:crNPC}.
The proofs of Sections \ref{sec:11NPC} and \ref{sec:crNPC} are reductions from 3-SAT.
This shows that while the puzzle is easy to describe and every puzzle in the aforementioned mobile apps is solvable without the need for guessing, since there is a unique solution (as claimed in the apps), there is probably\footnote{This is assuming $P \ne NP$.} no algorithm possible for a classical computer that would solve an arbitrary-sized puzzle without essentially some amount of brute-force guessing.

Next, we dive into the combinatorics underlying the puzzle, namely the number of puzzles of a given size and the number of possible solutions, primarily focusing on the latter. 
It turns out that the smallest size $(c, r)$-tree puzzle is $4c \times 4r$ (Theorem \ref{thm:smallest_size}), and there are exactly 2 tree configurations possible for that size. 
Thus, for the construction of non-trivial puzzles it would seem prudent to choose larger puzzles. 
For any pair of $c$ and $r$ there is a sequence of relevant board sizes and the number of $(c, r)$-tree configurations for these board sizes could be a new OEIS sequence, except for the case of $c=r=1$, which is OEIS sequence A002464 and counts the number of tree arrangements for the basic $(1, 1)$-tree puzzle. Thus we relate the ever-developing study of complexity theory to combinatorics in a new way and open the door to interesting combinatorial questions related to a doubly-infinite family of NP-complete puzzles.

%
%
%

\section{Introducing the $(c, r)$-tree Parks Puzzle}
\label{sec:Rules}
\begin{defn}
\label{defn:Parks}
    Let $c$ and $r$ be positive integers. We define a $(c, r)$-tree Parks puzzle to be an $m \times n$ board with $m$ regions called parks (hence the name of the puzzle), each marked with a different color on the grid. Typically, the parks are arranged in a way that the squares of any one park are contiguous (i.e. form a connected subgraph of the grid). The goal of the puzzle is to place trees in the squares of the grid (at most one in any square), such that:
\begin{itemize}
    \item each column contains exactly $c$ trees,
    \item each row contains exactly $r$ trees,
    \item each park contains exactly $r$ trees\footnote{It is also possible to consider specialized Parks puzzle parameters, such as alternatives to $m$ parks and $r$ trees in each, but then it is not clear if NP-completeness still holds.}, and
    \item no two trees are on squares that are adjacent to one another (not even diagonally).
\end{itemize} 
\end{defn}

We could also define a version of the Parks puzzle where instead we have $n$ parks, with the goal of placing $c$ trees in each, but that is the same as rotating the puzzle by $90^\circ$.

One immediately notes that 
\begin{equation}
    \label{eqn:size_limit}
    cn = rm
\end{equation} trees need to be placed in the grid, so we are limited to sizes $m \times n$ such that \eqref{eqn:size_limit} holds.

The most basic version of Parks puzzles is the $(1, 1)$-tree version, where we must place 1 tree in each column, each row, and each park. Andrea Sabbatini's mobile phone apps Logic Games \cite{LogicGames}, Parks Cantica \cite{ParksCantica}, and Parks Seasons \cite{ParksSeasons} 
have this version of the puzzle as well as the $(2, 2)$- and $(3, 3)$-tree versions, all of which are limited to square grids by \eqref{eqn:size_limit}. We will refer to $(t, t)$-tree puzzles simply as $t$-tree puzzles. 

Here is how one might approach solving an example 1-tree $6 \times 6$ Parks puzzle (see Figure \ref{fig:1TreeParksEg}). In this example the starting position is shown on the left and its solution on the right, where a ``T'' denotes a tree and an ``X'' denotes a square that the solver has determined to definitely \textit{not} have a tree. The reader will readily notice the yellow park has only one square, namely F2, so that square must have a tree; from there, we can cross out the squares diagonally adjacent to F2 and the ones in the same row or same column. 
The first row is all red (i.e. is entirely contained in the red park), so because there must be a tree in the first row, the red park must have its only tree in the first row; similarly, since Column A is all red, the red tree must be in Column A, hence the red tree is in A1, and we can cross out the rest of the red squares. Because the cyan park has only the two squares E5 and E6, we cannot have trees in E4, D5, or D6 (a tree in any of these places would prevent the cyan park from having a tree). Now, a tree in D4 would prevent Row 5 from having a tree so the only square left in the green park is in D3, therefore D3 must have a tree, and from there it follows that B4, next C6, and finally E5 must have trees.

\begin{figure}[hbtp]
    \centering
    \includegraphics[height=1in]{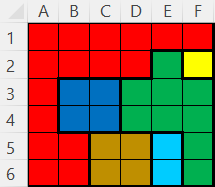} \hspace{6pt} \includegraphics[height=1in]{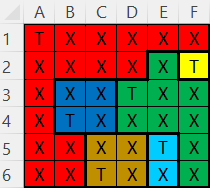}
    \caption{A 1-tree $6 \times 6$ Parks puzzle.}
    \label{fig:1TreeParksEg}
\end{figure}

A computer can readily be programmed to take into account logic similar to that demonstrated above to quickly solve the subset of puzzles that are amenable to solution by such simple logic. However, as we will prove, the general problem of determining whether or not a Parks puzzle has a solution is NP-complete. Thus, ultimately there is probably no way to program a computer to quickly solve larger, more complex puzzles\footnote{Again, assuming P $\ne$ NP.}, no matter how complex the solver. 

Figure \ref{fig:2TreeParksEg} below shows an example of a 2-tree Parks puzzle (left) and its solution (right); we leave it as an exercise to the reader to show that the solution presented on the right in the figure is the only one. 
As the reader may discover, the puzzle in Figure \ref{fig:2TreeParksEg} requires slightly more complicated logic to solve than the puzzle in Figure \ref{fig:1TreeParksEg}, and it would seem that in general, the logic behind solving $t$-tree puzzles becomes more difficult for a human solver to comprehend as $k$ grows larger\footnote{At the time of writing the versions of the Logic Games, Parks Landscapes, and Parks Seasons apps appear to only have puzzles with $k \le 2$, and the Parks Cantica app has puzzles with $k \le 3$, with only a small portion of those puzzles having $k = 3$.}.

\begin{figure}[hbtp]
    \centering
    \includegraphics[height=1in]{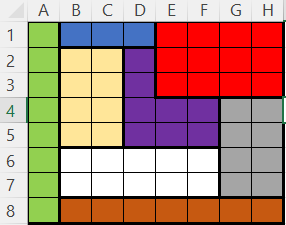} \hspace{6pt} \includegraphics[height=1in]{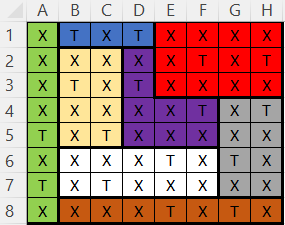}
    \caption{A 2-tree $8 \times 8$ Parks puzzle.}
    \label{fig:2TreeParksEg}
\end{figure}

Non-square $(c,r)$-tree Parks puzzles may offer a greater variety of puzzles, since their complexity lies in-between $t$-tree and $(t+1)$-tree Parks puzzles, yet the authors were unable to find examples of non-square Parks puzzles. 
In Figure \ref{fig:21TreeParksEg} we present a (2, 1)-tree puzzle (on the left) and its solution, which we leave to the reader to verify as the unique one. Likewise, in Figure \ref{fig:12TreeParksEg} we present a (1, 2)-tree puzzle (on top) and its solution (below), which we also leave to the reader to verify as unique. 

\begin{figure}[hbtp]
    \centering
    \includegraphics[height=1in]{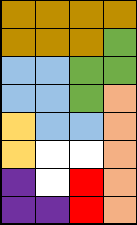} \hspace{6pt} \includegraphics[height=1in]{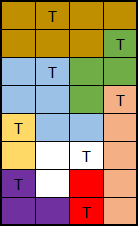}
    \caption{A (2, 1)-tree $8 \times 4$ Parks puzzle (left) and its solution (right).}
    \label{fig:21TreeParksEg}
\end{figure}

\begin{figure}[hbtp]
    \centering
    \includegraphics[height=0.5in]{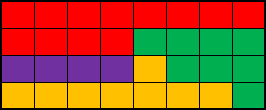}\vspace{12pt}\\
    \includegraphics[height=0.5in]{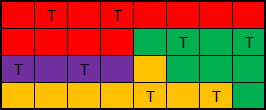}
    \caption{A (1, 2)-tree $4 \times 8$ Parks puzzle (top) and its solution (bottom).}
    \label{fig:12TreeParksEg}
\end{figure}

\section{P vs. NP}

Recall that a decision problem is a computation problem whose answer must be ``yes'' or ``no''. The general Parks Puzzle decision problem is to determine, given
a board with some marked trees, if there is some configuration of
the board that is a valid solution to the puzzle. In our paper we
show that the subset of the decision problem without any marked
trees, Parks, is NP-complete, proving a lower bound for the general
decision problem. Recall that an ``oracle'' is an abstract machine that resembles a ``black box'' that is able to give an answer to any decision problem; a polynomial time oracle gives that answer in polynomial time (polynomial in the length of the input string). Given a polynomial time oracle for the Parks decision problem, the ``function'' problem of finding an explicit solution to a Parks puzzle would be trivially solvable in polynomial time by
checking every square (placing a tree in it and asking the oracle if there is still a valid solution), each in polynomial time. Thus, the function problem is essentially just as difficult as the decision problem: if one can be solved in polynomial time, so can the other. Therefore, in this paper we focus on the decision problem and prove that the Parks Puzzle decision problem is NP-complete, which suggests that it is very
unlikely that there exists any efficient (polynomial time) algorithm to either check that a puzzle has a solution or to find the solution.

The first problem proven to be NP-complete was SAT \cite{johnson2012brief, Cook1971}, the Boolean Satisfiability Problem: given a set of Boolean variables $x_1, x_2, \dots, x_n$ (which can be set to True or False), and a Boolean expression such as $(x_1 \vee x_2) \wedge (\neg x_2 \vee x_3)$, where ``$\vee$'' is the logical ``OR'', $\wedge$ means ``AND'', and $\neg$ means ``NOT'' (the negation of the variable's value), the problem is to determine whether or not there is a way to assign values to the variables such that the logical expression can be satisfied, i.e. evaluates to True. 
At the same time, a close relative of SAT called 3SAT was proven NP-complete \cite{Cook1971}, and it is this problem we will use to show that the Parks Puzzle is NP-complete. 3SAT, also known as ``3-Satisfiability'', is a subset of the SAT problem that is restricted to logical expressions in \textbf{3CNF} (3-conjunctive normal form). 
These expressions consist of an arbitrary number of 3-disjunctive clauses such as $(x_1 \vee \neg x_2 \vee x_3)$ that involve three variables or their negations in a \textit{3-disjunction} (a three-input OR gate). These clauses are then all put in a conjunction (an AND gate with an arbitrary number of inputs). Some examples of 3CNF expressions are: $x \vee y \vee z$, $(x \vee \neg x \vee y) \wedge (y \vee z \vee \neg w)$, and $(x \vee \neg y \vee \neg z) \wedge (\neg x \vee y \vee w)$.

\section{Parks is in NP}
\label{sec:ParksNP}

An instance of a $(c, r)$-tree $m \times n$ Parks puzzle $\pi$ can be inputted into a Turing Machine, roughly speaking, as the four numbers $c, r, m, n$ followed by a list of color numbers corresponding to each coordinate in the grid in order, starting with the top row, going from left to right, then next row, etc.
The set of $(c, r)$-tree $m \times n$ Parks Puzzles with a valid solution is the language of interest. 
A certificate $\sigma$ for a $(c, r)$-tree Parks puzzle is a list of coordinates marking the locations of the trees.
The $(c, r)$-tree Parks verification problem is: given $\langle\pi, \sigma\rangle$, where $\pi$ is a Parks puzzle and $\sigma$ a possible certificate for $\pi$, does $\sigma$ correspond to a valid solution of the puzzle, satisfying the rules in Definition \ref{defn:Parks}?
In other words, do the trees in $\sigma$ correspond to a valid solution of $\pi$? Moreover, can we check whether or not this is true in polynomial time in the size of the puzzle $\pi$?

It is easy to see that $(c, r)$-tree Parks is in NP without any elaborate descriptions or algorithms, but we will provide one such algorithm here for completeness. 
The data for an $m \times n$ Parks puzzle can be stored using $N = O(mn\log m)$ symbols, since there are $m$ colors and each can be represented with a number using $\log m$ digits.
Given the puzzle $\pi$ and certificate $\sigma$, $\sigma$ is obviously of polynomial size in $N$ (the number of trees is no more than the number of squares in the puzzle), and the validity of $\sigma$ can be checked as follows.
First, we sort $\sigma$ in dictionary ordering, by row and then column (which takes $O(N \log N)$ time \cite{CLRS}) and check that each row has exactly $r$ trees in $O(mr) = O(N)$ 
time. 
Then we check that no trees are adjacent by going through this sorted list and for each pair of adjacent rows, sort the column numbers of the trees in that pair of rows, using the algorithm for Merge Sort for example, and ensure that the horizontal distance between any two adjacent items is at least 2, which takes $O(mr) = O(N)$ time. 

Next, we check that there are $c$ trees in each column and $r$ trees in each park as follows. First, run through the input $\pi$ and store the color numbers in an $m \times n$ array. Then we go through $\sigma$, adding to a counter for each column and each park every time we find a tree belongs there. Finally, we make sure the column counters all equal $c$ and park counters equal $r$. This process takes $O(N)$ time total. The total time is $O(N \log N)$, so $\langle \pi, \sigma \rangle$ can be verified in polynomial time and $(c, r)$-tree Parks is in NP. 
Note that the proof assumes nothing about the contiguity of the parks.



\section{1-Tree Parks is NP-Complete}
\label{sec:11NPC}

We have already shown that for any $c$ and $r$, the $(c, r)$-Parks is in NP, so it remains to show the decision problem is NP-hard, which we will do via a polynomial-time transformation from 3SAT. As stated, the decision problem says nothing about the parks being contiguous regions, but since this makes more sense to a human, we will use contiguous parks everywhere, deriving the stronger result that the subset of $(c, r)$-tree Parks puzzles $\pi$ where the parks are contiguous is itself enough to be NP-hard. We will first show that 1-tree Parks is NP-hard, then in the next section generalize these ideas to $(c, r)$-tree puzzles.

Given a 3CNF expression $\phi$, we will build a corresponding Parks puzzle $\pi$ by first identifying all the variables involved, and putting in ``variable parks'' that correspond to these (see Figure \ref{fig:var_park}); for the 1-tree puzzle, these will be $1 \times 2$ parks, where a tree in the left square (A1) will correspond to the variable in $\phi$ being set to True, and a tree in the right square (B1) will correspond to the the variable being False. We will build the Parks puzzle in a way that any assignment of values to the variables in $\phi$ will make $\phi$ evaluate to True if and only if placing the corresponding trees in the variable parks in $\pi$ gives a valid solution to the puzzle $\pi$.

\begin{figure}[hbtp]
    \centering
    \includegraphics{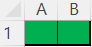}
    \caption{A variable park in a 1-tree puzzle}
    \label{fig:var_park}
\end{figure}

We will build the puzzle $\pi$ by making IFF ``gadgets'' (gadgets are pieces of the puzzle used for certain specialized functionality) that allow us to access and reuse the values of these variables as much as necessary. The columns corresponding to the True / False values of variables will then be used to build OR gadgets that correspond to each of the 3-disjunctive clauses. The key observation that makes all this work is the following: the 3CNF expression $\phi$ has a solution if and only if each 3-disjunctive clause can be satisfied with some assignment of variable values, which is true if and only if each OR gadget in the puzzle can be filled, and this happens if and only if the whole puzzle has a solution. And because of the way we make variable parks, there is a one-to-one correspondence between the solutions to the puzzle and the assignments of variable values that make $\phi$ evaluate to True. Just as the tree in the red park in Figure \ref{fig:1TreeParksEg} must be in the upper-left because the whole left column and top row are red, we will make use of this reasoning and make a white top row and right column, so the tree in the white park will necessarily be in the upper-right, and then the remaining white squares can be used as empty space and to separate parks. We will also use 1-square parks to cancel out a row or a column as needed. 

In the following examples, we will mark basic facts about where there must be trees with a ``T'', or must not be trees with an ``X'', just as we have done in Figure \ref{fig:1TreeParksEg}, so that the relationship between our various ``gadgets'', and parks within those gadgets, becomes more evident.

Figure \ref{fig:IFF_1tree} shows an example IFF gadget. Note that the three green parks, which correspond to variables, must have the same ``value'', i.e. either they all have a tree on the left (as in the top part of the figure) or they all have a tree on the right (as in the bottom). The reader should verify that in both parts of the figure, if we place any one tree that is pictured, the remaining trees in the same picture would have to be placed. Therefore, we may rightly call all of the green parks by the same variable name, and use their columns whenever we have that variable come up in a 3-disjunction.

\begin{figure}[hbtp]
    \centering
    \includegraphics[height=2in]{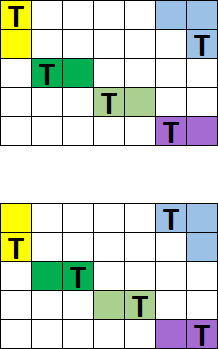}
    \caption{IFF gadget in a 1-tree puzzle. Top: all variables are set to True; bottom: all are set to false.}
    \label{fig:IFF_1tree}
\end{figure}

Figure \ref{fig:OR_1tree} shows an example of an OR gadget, which creates a disjunction between the three variables corresponding to the two-square parks in the bottom three rows, which are all variable parks. As such, the gate itself only consists of the top 9 rows of Figure \ref{fig:OR_1tree}. The variable parks may or may not be inside an IFF gate, and they may even be inside the same one. Note the gray-colored parks are all separate parks and should technically each have their own color; these are not necessarily directly adjacent to the main part of the gate (the yellow, brown, and red parks), but rather are collected to the right of the IFF gates and variables, each in its own column, as shown in Figure \ref{fig:BigEg}. Rows 1, 4, and 7 make up the actual functionality of the gadget, while rows 2, 5, and 8 only serve to connect the pieces of the parks in rows 1, 4, and 7 so as to have contiguous parks. Columns C and F are only there to show that there may be a separation between the variable parks used. The separation may not even be there if two of the variable parks are in the same IFF gate, or the separation may in actuality be quite large, and the gadget would work just as well; the only adjustment necessary is to fill in rows 2, 5, and 8 with yellow, brown, and red squares, respectively, to just make sure those parks are contiguous.

\begin{figure}[hbtp]
    \centering
    \includegraphics[height=1in]{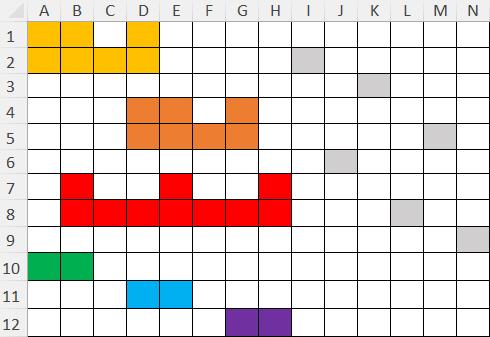}
    \caption{OR gadget in a 1-tree puzzle}
    \label{fig:OR_1tree}
\end{figure}

The OR gadget in Figure \ref{fig:OR_1tree} works as follows. The red park ensures that if all three variables are set to False (so the trees in all three variable parks are on the right), then there is no solution to the puzzle because there would be nowhere to put a tree in the red park. The yellow and brown parks then make sure there is a unique valid solution to the puzzle in all other cases, i.e. if at least one variable is set to true. The reader should verify that there is exactly one way to place trees in the yellow, brown, and red parks if at least one of the variables is set to true (there are 7 cases to check). The yellow, brown, and red parks ensure that there is a tree in each column where the trees in the variable parks are missing.

The OR gadget in Figure \ref{fig:OR_1tree} can be easily modified to account for the negation of a variable instead by swapping the two columns corresponding to that variable, readjusting rows 2, 5, 8 as necessary to ensure contiguity. For example, if we needed a disjunction involving the negation of the variable corresponding to the purple park, we would switch columns G and H, add a brown square in G5, and remove the red square in H8. Note that because all the pieces (yellow, brown, red and the variable parks) are essentially disjoint from one another, the adjacency rule for trees plays no effect here. That means switching columns affects the set of solutions to the puzzle in precisely the way that we want: every possible solution just has the trees in those columns switched.

All that remains is to put these gadgets together, like the pieces of a huge puzzle. Figure \ref{fig:BigEg} shows the puzzle $\pi$ constructed from the 3CNF expression 
\[\phi = (X \vee Y \vee Z) \wedge (\neg X \vee Y \vee W).\]
The construction starts by making an IFF gadget for every variable that appears more than once in the expression $\phi$, creating as many variable parks inside the IFF gadget as the number of times the variable or its negation appears (the two green parks for the $X$ variable and the two purple parks for $Y$ in this case); we have written an uppercase ``X'' below the squares which, if they contained a tree, would correspond to a True value for $X$ and a lowercase ``x'' under those which would correspond to a False value for $X$, and similarly for the other variables. We then create variable parks for those variables that appear only once, outside of an IFF gadget (it is not needed for the 1-tree puzzle but will be needed for the $t$-tree puzzle in general in the next section), and we separate the variable parks and IFF gadgets so they are not in adjacent columns by adding the gray one-square parks below. Next, we create the OR gadgets above (or they could just as well be placed below) the variable parks, making sure that if a variable appears in a negative form (namely, $\neg X$ in the second 3-disjunction), then we switch those corresponding columns of the OR gadget. Finally, we make sure the white region has a tree in the upper-right by making the entire rightmost column and top row white.


\begin{figure}[hbtp]
    \centering
    \includegraphics[width=\columnwidth]{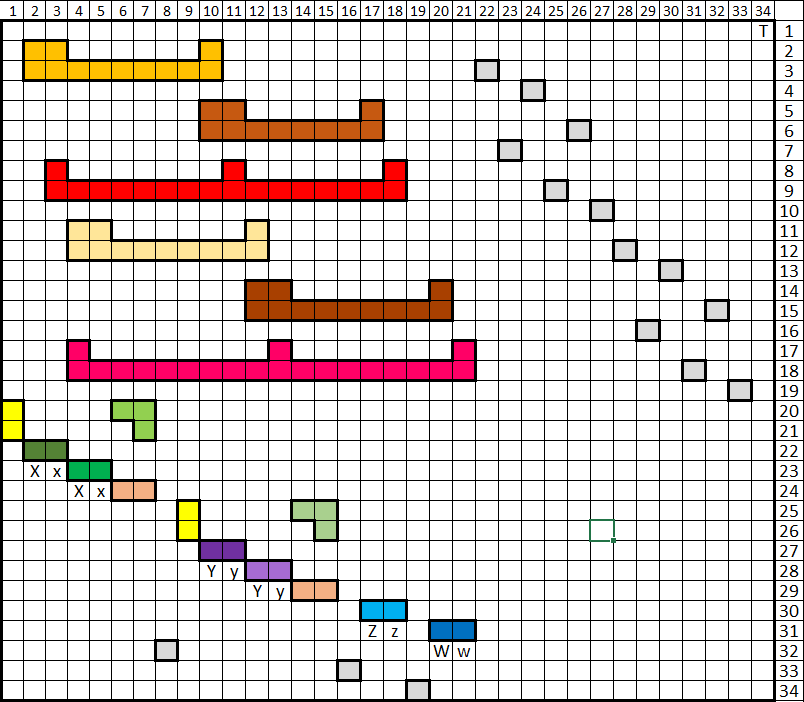}
    \caption{Example of the puzzle $\pi$ constructed from the 3CNF expression $(X \vee Y \vee Z) \wedge (\neg X \vee Y \vee W)$}
    \label{fig:BigEg}
\end{figure}

\section{$(c, r)$-Tree Parks is NP-Complete}
\label{sec:crNPC}
The previous section proved the NP-completeness of 1-tree Parks, and in this section we extend these ideas to $(c, r)$-tree Parks. The idea is to sort of stretch the gates used in 1-tree Parks $r$ times horizontally, and to supplement with $1 \times (2r-1)$-sized parks vertically to account for the number of trees needed in each column. 

\subsection{Variable parks}
Thus a ``variable park'' for a boolean variable $x$ will be a $1 \times 2r$-sized park, with only two possible configurations in it, as shown in Figure \ref{fig:crVarPark} in the case $r = 3$. The left configuration of trees represents $x$ being true, and the right configuration - $x$ being false. Any variable park will be enclosed in an IFF gate (see Figure \ref{fig:crIFF}), which is what ensures that those are the only two possible tree configurations.

\begin{figure}[hbtp]
    \centering
    \includegraphics[height=0.25in]{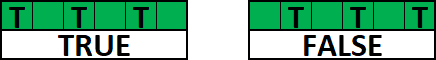}
    \caption{$(c, 3)$-tree variable park}
    \label{fig:crVarPark}
\end{figure}

\subsection{IFF gate}
We next present the IFF gate, an example of which is shown for $(2, 3)$-tree Parks in Figure \ref{fig:crIFF}. The trees that must always be placed in their respective places, no matter what the values of the variables, are already written in as T's in Figure \ref{fig:crIFF}. Here, the two shades of green represent two variable parks whose values are equivalent; this could be extended to the right in the obvious way to account for any number of variable parks that would all be equivalent, or we could simply remove Columns M - R if we only need to use the variable park ones, shifting the brick-colored parks in rows 4, 5, 7 down one and those in rows 1 and 2 down by two rows. The yellow park consists of a single column of two yellow squares together with $2r-2$ squares extending to the left of the bottom square. Note that, since this park must contain $r$ squares, we must have a tree in the park's rightmost column (Column F in Figure \ref{fig:crIFF}) and therefore there is no tree in the next square over (E10 in Figure \ref{fig:crIFF}) and then the trees must be placed in every other square from then on. The purple park is just a $1 \times 2r$-sized park. The light blue park consists of the same turned L as we had for 1-tree Parks with $2r-2$ more squares to the left of the top row, as in Figure \ref{fig:crIFF}. 

The brick-colored $1 \times (2r-1)$ parks in rows 1 - 8 and in the bottom-left of Figure \ref{fig:crIFF} are meant to all be distinct parks, so they technically should have different colors; they are only the same color here for convenience of discussion here. These parks vanish if $c = 1$, and if $c > 2$ then these rows are replicated above the current picture $c-2$ more times. Note that, except for the lower-left brick-colored park, this configuration reduces to the 1-tree IFF gate in Figure \ref{fig:IFF_1tree}.

\begin{figure}[hbtp]
    \centering
    \includegraphics[width=\columnwidth]{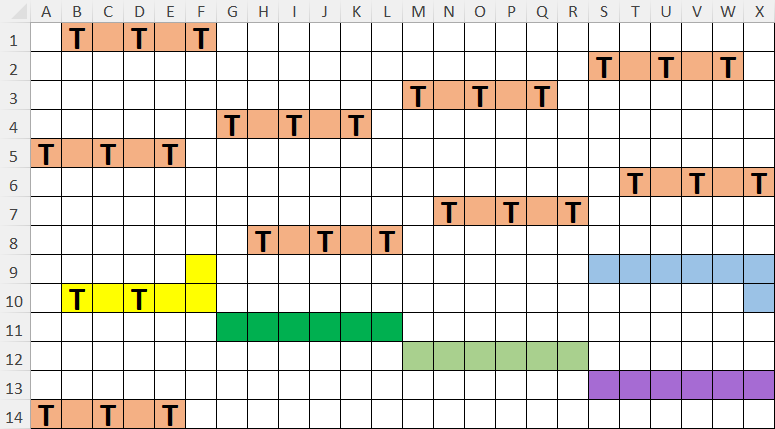}
    \caption{$(2, 3)$-tree IFF gate}
    \label{fig:crIFF}
\end{figure}

The reader should verify that there are only two possible tree configurations for the IFF gate:
    \begin{enumerate}
        \item There is a tree in the uppermost square of the yellow park and all the variable parks are set to true as in Figure \ref{fig:crVarPark}.
        \item There is a tree in the lower-right square of the yellow park and all the variable parks are set to false as in Figure \ref{fig:crVarPark}.
    \end{enumerate}
Regardless of the configuration of the IFF gate, each column outside of the columns of the variable parks, as well as those columns of the variable parks that contain trees, will have exactly $c$ trees. The columns where the variable parks do not have a tree will have $c-1$ trees. Finally, each row and each park will have exactly $r$ trees.

\subsection{OR gate}
The last gadget is the OR gate, which is simply the stretched out version of the OR gate in Figure \ref{fig:OR_1tree}, as presented in Figure \ref{fig:crOR} in the case $r = 2$. The three variable parks here (green, blue, and purple) will be part of IFF gates and, per the discussion above, there will be $c-1$ trees in each column where the trees in the variable parks are missing, so the yellow, brown, and red parks serve to add exactly one tree in each such column.

\begin{figure}[hbtp]
    \centering
    \includegraphics[width=\columnwidth]{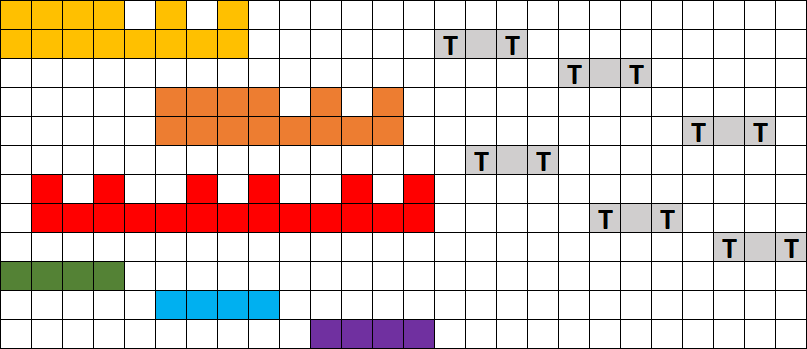}
    \caption{$(c, 2)$-tree OR gate}
    \label{fig:crOR}
\end{figure}

If $c > 1$, we will need to put in more parks like the gray ones on the right of Figure \ref{fig:crOR} below the gray ones currently there, in the same order of rows, using $6(c-1)$ additional rows to do so (if additional OR gates are used, these parks could go in the corresponding rows of those gates instead of new rows to save space, as we have done in Figure \ref{fig:2BigEg}). As these already have $r$ trees in them, after doing so we will have exactly $c$ trees in each column and $r$ trees in each row and each park corresponding to the OR gate. If we need to use the negation of the boolean value, we again simply switch each corresponding pair of columns as in the 1-tree Parks discussion. For example, if we need to use the negation of the green variable, we would switch the first two columns, as well as the 3rd and 4th columns, in Figure \ref{fig:crOR}.

\subsection{Putting it all together}
Note that the IFF gates in Figure \ref{fig:crIFF} can be placed diagonally next to each other without interfering with each other. The top row of the entire puzzle will be white (the same color as the white space in all of the gadgets shown in Figures \ref{fig:crIFF} and \ref{fig:crOR}) and the very last row of the rightmost OR gate will be missing the gray park, so that the trees that would have gone there must go into the white park instead (in the top row). See Figure \ref{fig:2BigEg} for a $68 \times 68$ puzzle that is the 2-tree version of Figure \ref{fig:BigEg}. Note that here, we do not need the one-square parks in the bottom 3 rows of Figure \ref{fig:BigEg} to separate the IFF gates, but we do need to make much more room for the IFF gates that hold the Z and W variables.

\begin{figure}[hbtp]
    \centering
    \includegraphics[width=\columnwidth]{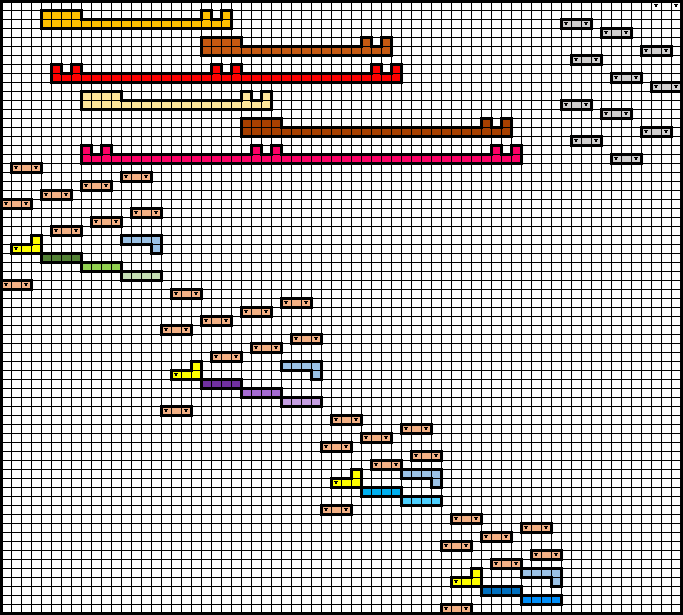}
    \caption{Example of a $68 \times 68$ 2-tree puzzle $\pi$ constructed from the 3CNF expression $(X \vee Y \vee Z) \wedge (\neg X \vee Y \vee W)$}
    \label{fig:2BigEg}
\end{figure}

The gate constructions and preceding discussion ensure that the resulting $m \times n$ $(c, r)$-tree puzzle satisfies \eqref{eqn:size_limit} and will have a unique solution for any configuration of trees that corresponds to a valid substitution of values into the 3CNF expression from which it was built.

In constructing this puzzle, if we have $M$ boolean variables $x_1, x_2$, $\dots, x_M$, and if we have $N$ 3-disjunctions, with each variable $x_i$ appearing $a_i$ times among all disjunctions, then we will need to use no more than:
\begin{itemize}
    \item $M$ IFF gates, where the one for variable $x_i$ comes with $4r + 2ra_i$ columns and $8(c-1) + 4 + a_i$ rows and
    \item $N$ OR gates, each of which comes with at most $3 + 6c$ new rows and $6r$ new columns.
\end{itemize}

Overall, the total number of rows and columns is a polynomial $p(c,r,M,N)$ so this is a polynomial-time reduction from 3CNF. This proves:

\begin{thm}
For any positive integers $c$ and $r$, the family of $(c, r)$-tree Parks puzzles is NP-complete.
\end{thm}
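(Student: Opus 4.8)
The plan is to combine the NP-membership result of Section~\ref{sec:ParksNP} with an explicit polynomial-time reduction from 3SAT whose gadgets are precisely the variable parks, IFF gates, and OR gates constructed above. Since membership in NP has already been established for every $c$ and $r$, the entire burden falls on NP-hardness. I would fix arbitrary positive integers $c$ and $r$ and, given a 3CNF formula $\phi$ with variables $x_1, \dots, x_M$ and clauses $C_1, \dots, C_N$ (with $x_i$ occurring $a_i$ times), describe the map $\phi \mapsto \pi$ precisely enough to bound its size and prove its correctness.

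I would spell out the reduction in three stages matching the three gadgets. First, for each variable $x_i$ I would place one IFF gate containing $a_i$ copies of the variable park, which, by the two-configuration property asserted for Figure~\ref{fig:crIFF}, forces every copy of $x_i$ to carry the same Boolean value. Second, for each clause $C_j$ I would attach an OR gate whose three input columns are drawn from the appropriate variable-park copies, switching the relevant column pairs whenever a literal appears negated. Third, I would fix the global frame: the entirely white top row and right column pin the white park's tree to the upper-right corner and let the leftover white squares serve as padding and separators, while the deliberately omitted gray park in the last row of the rightmost OR gate forces the displaced trees into the white top row, balancing the global count $cn = rm$ from \eqref{eqn:size_limit}.

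The heart of the argument is a bijection between satisfying assignments of $\phi$ and valid solutions of $\pi$, proved in two directions. Forward: given a satisfying assignment, set each variable park according to its truth value; the IFF gates are consistent by construction, each OR gate has at least one true input and hence (by the seven-case verification promised for the gadget) admits a unique completion of its yellow, brown, and red parks, and the deficit of $c-1$ trees created in each false-literal column is repaired by exactly one OR-gate or padding tree, so all column, row, and park counts come out to $c$, $r$, and $r$ respectively. Conversely, in any valid solution the IFF gates force each variable to a well-defined value, and the red park of each OR gate cannot be filled when all three of its inputs are false; hence every clause is satisfied and the induced assignment satisfies $\phi$.

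The main obstacle I anticipate is not any single gadget but the \emph{global bookkeeping of tree counts}: one must verify that the per-column surplus and deficit introduced by the variable parks are exactly cancelled by the OR gates, padding parks, and white frame, simultaneously in every row, column, and park, and that placing gadgets diagonally adjacent (as in Figure~\ref{fig:2BigEg}) never creates an illegal adjacency or an unintended extra solution. I would handle this by treating each gadget as count-neutral in isolation---each of its own rows, columns, and interior parks already carries its required quota---so that the only cross-gadget coupling occurs in the shared variable columns, and then checking that the white region has exactly enough free squares to absorb the residual trees forced by \eqref{eqn:size_limit}. Finally, summing the row and column totals listed for the $M$ IFF gates and $N$ OR gates yields a board of size polynomial in $c, r, M, N$, completing the polynomial-time reduction and, together with NP-membership, the theorem.
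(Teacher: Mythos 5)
Your proposal follows essentially the same route as the paper: NP-membership from the verification algorithm, then an NP-hardness reduction from 3SAT built from the same three gadgets (variable parks inside IFF gates, OR gates per clause with column-swapping for negated literals, and the white frame plus gray padding parks to balance the counts), with the same one-to-one correspondence between satisfying assignments and puzzle solutions and the same polynomial bound on the board dimensions. The count-neutrality bookkeeping you flag as the main obstacle is exactly the accounting the paper carries out when it notes that non-variable columns already hold $c$ trees, false-literal columns hold $c-1$, and the OR and padding parks supply the missing tree in each.
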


\section{Minimal Configurations of Trees}
\label{sec:Shuriken}
Now that we know the general family of $(c, r)$-tree Parks puzzles is NP-complete, we have reason to believe there is no general algorithm for solving an arbitrary-sized puzzle without doing some amount of guesswork in general. For smaller puzzles, on the other hand, it may be possible to write an algorithm that will find the solution without any guessing. In particular, we will now show that the smallest size nontrivial $(c, r)$-tree puzzle is $4c \times 4r$, and there are only two tree configurations possible for this size. These configurations have no tree placements in common, so given a puzzle of this size, if there is even a single park that contains no squares from one of the two configurations but only contains squares from the other configuration, then we can immediately tell what the solution should be. Or, if we can find a single tree or even cross out a square containing a tree from one of the configurations, that immediately tells us the solution as well.  

\begin{thm}
    \label{thm:smallest_size}
    Besides the trivial $1 \times 1$ puzzle for $c = r = 1$, the smallest possible size of grid that can have a valid configuration of trees for a $(c, r)$-tree puzzle is $4c \times 4r$. Furthermore, there are only 2 possible configurations of trees for the $4c \times 4r$ puzzle, namely the ``Shuriken arrangements'' described below.
\end{thm}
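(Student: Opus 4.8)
The plan is to establish both claims at once, treating ``configuration'' as a tree placement obeying only the row, column, and king-nonadjacency constraints; recall \eqref{eqn:size_limit} supplies $cn = rm$ trees. First the size lower bound. A degenerate board with $m=1$ or $n=1$ is quickly seen to force $c=r=1$ and the trivial $1\times1$ grid, so assume $m,n\ge 2$. Fix two consecutive rows: their tree-columns are two disjoint $r$-element sets (disjoint, else a vertical adjacency), any upper-row column and any lower-row column differ by at least $2$ (else a vertical or diagonal adjacency), and within each row the spacing is already $\ge 2$; hence the $2r$ columns are pairwise at distance $\ge 2$ in $\{1,\dots,n\}$, forcing $n\ge 4r-1$. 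If $n=4r-1$ the slack is exactly $0$, so those columns must be precisely the odds $\{1,3,\dots,4r-1\}$; since every row lies in some consecutive pair, every tree occupies an odd column, leaving the (nonempty, as $r\ge1$) even columns treeless and contradicting $c\ge1$ per column. Thus $n\ge 4r$, and transposing gives $m\ge 4c$.

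Next I would prove realizability and uniqueness on the $4c\times4r$ board by a $2\times2$-block decomposition. There are $2c\cdot2r = 4cr$ blocks, each holding at most one tree (its four cells are mutually king-adjacent), while the total tree count is $cn = 4cr$; hence \emph{every block holds exactly one tree}. Encode the tree of block $(I,J)$ by $a_{I,J}\in\{0,1\}$ (top/bottom cell) and $b_{I,J}\in\{0,1\}$ (left/right cell). Horizontal block-adjacency forces $b_{I,J}\le b_{I,J+1}$ and vertical block-adjacency forces $a_{I,J}\le a_{I+1,J}$, while the per-row and per-column tree counts force exactly $r$ zeros among the $a$'s of each super-row and exactly $c$ zeros among the $b$'s of each super-column. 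The rigidity step combines these: each column of the matrix $(a_{I,J})$ is monotone in $I$, yet every row of it contains the \emph{same} number $r$ of zeros, so the monotone columns cannot switch value at any interior row and each column is all-$0$ or all-$1$; that is, $a_{I,J}=\bar a_J$ with exactly $r$ of the $\bar a_J$ equal to $0$, and symmetrically $b_{I,J}=\bar b_I$ with exactly $c$ zeros.

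It then remains to impose the two diagonal block-adjacencies. A short case check shows the main diagonal forbids having simultaneously a descent (a $1$ immediately followed by $0$) in $\bar a$ and a descent in $\bar b$, while the anti-diagonal forbids having simultaneously an ascent in $\bar a$ and an ascent in $\bar b$. As $\bar a$ and $\bar b$ are nonconstant $0/1$ strings (with $r$ and $c$ zeros respectively, and $r,c\ge1$), each contains at least one change; chasing the two prohibitions then forces exactly $\bar a = 1^{r}0^{r},\ \bar b = 0^{c}1^{c}$ or $\bar a = 0^{r}1^{r},\ \bar b = 1^{c}0^{c}$ (the strings $1^{r}0^{r}$ etc.\ meaning $r$ ones then $r$ zeros). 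These are the two Shuriken arrangements; one checks directly that both obey every constraint, proving the board is realizable with exactly these two solutions (which share no tree, since for each block they pick opposite rows).

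The main obstacle I anticipate is the rigidity step of the second paragraph: squeezing out of ``monotone columns plus a row-constant zero-count'' the conclusion that each column takes only an extreme value, equivalently that the column-threshold profile is forced to $0$ or $2c$. Once that is in hand the diagonal case analysis is very short, and the surrounding packing-and-slack bookkeeping is routine.
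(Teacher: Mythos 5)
Your proposal is correct and follows essentially the same route as the paper's proof: the $2\times 2$ block decomposition with one tree per block by pigeonhole, monotone propagation of the left/right and top/bottom positions, the row/column counts forcing each block-row and block-column to be constant, and the diagonal adjacencies forcing the two Shuriken arrangements. Your $a_{I,J},b_{I,J}$ encoding and the ascent/descent analysis are just an algebraic restatement of the paper's box types A--D and its ``junction'' figures, and the rigidity step you flag as a worry does go through exactly as you sketch it.
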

\begin{proof}
Suppose we have a $(c, r)$-tree Parks puzzle that has more than one row. Consider any two adjacent rows of the puzzle, and break them up into $2 \times 2$ boxes as shown in Figures \ref{fig:8x8config} and \ref{fig:8x12config}, with exception of possibly the rightmost box, which will be $2 \times 1$ if the number of columns is odd. Since a tree cannot be adjacent to another tree, even diagonally, each box contains at most one tree. These two rows need to have $2r$ trees in them, so by the pigeonhole principle we need at least $2r$ boxes, which implies the number of columns is at least $4r-1$. Now the only way to have only $4r-1$ columns is for \textit{every} pair of adjacent rows to have trees in the odd-numbered columns only, which is impossible since the even-numbered columns need trees as well. So there must be at least $4r$ columns and, by similar reasoning, at least $4c$ rows. 

To make the rest of the proof easier to understand, let us label the four ``types'' of $2\times 2$ boxes by where the tree is positioned in each box, as in Figure \ref{fig:4boxes}. 

\begin{figure}[hbtp]
    \centering
    \includegraphics[height=.55in]{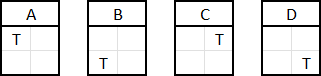}
    \caption{The 4 types of $2 \times 2$ boxes (labelled at the top by type).}
    \label{fig:4boxes}
\end{figure}

First we present a description of the two valid tree positions for $(c,r)$-tree $4c \times 4r$ puzzles. We call these positions ``Shuriken arrangements'' because they sort of resemble the shape of a Japanese Shuriken, whose blades can be extended either vertically or horizontally. Figure \ref{fig:1shuriken} shows the case $c = r = 1$, and Figure \ref{fig:12shuriken} shows the cases $c = 1, r = 2$.
    
    \begin{figure}[hbtp]
        \centering
        \includegraphics[height=0.5in]{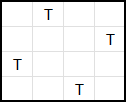} \hspace{6pt} \includegraphics[height=0.5in]{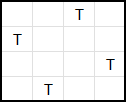}
        \caption{The Shuriken arrangements for the 1-tree Parks puzzles, the only two valid tree configurations for 1-tree $4 \times 4$ Parks puzzles}
        \label{fig:1shuriken}
    \end{figure}

    \begin{figure}[hbtp]
        \centering
        \includegraphics[height=0.5in]{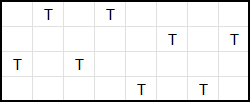}\vspace{12pt}\\
        \includegraphics[height=0.5in]{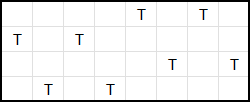}
        \caption{The Shuriken arrangements for the (1, 2)-tree Parks puzzles, the only two valid tree configurations for (1, 2)-tree $4 \times 8$ Parks puzzles.}
        \label{fig:12shuriken}
    \end{figure}

    The $(c,r)$-tree versions of the Shuriken arrangements are $4c \times 4r$ grids that simply extend the configurations in Figures \ref{fig:1shuriken} and \ref{fig:12shuriken} horizontally and/or vertically. To be specific, we look at the $2 \times 2$ box in each of the four corners of the central $4 \times 4$ square and simply replicate this type of box to the entire corresponding quadrant of the grid, so all the $2 \times 2$ boxes in any one quadrant of the grid will have their trees in the same place. For example, the box consisting of C3, D3, C4, D4 in the upper-left corner of the $4 \times 4$ middle square in Figure \ref{fig:8x8config} is of the same type (type C) as the other three boxes in the upper-left quadrant, namely the box A1, B1, A2, B2; the box C1, D1, C2, D2; and the box A3, B3, A4, B4. Similarly, the box in E5, F5, E6, F6 is the same type (type B) as the $2 \times 2$ boxes in the lower-right quadrant in Figure \ref{fig:8x8config}, and so on for the other two quadrant. 
    
    \begin{figure}[hbtp]
        \centering
        \includegraphics[height=1in]{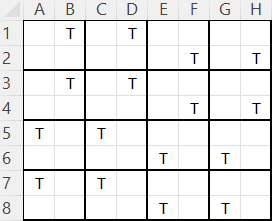}
        \caption{An $8 \times 8$ Shuriken arrangement, one of two valid tree configurations for a 2-tree $8 \times 8$ Parks puzzle}
        \label{fig:8x8config}
    \end{figure}
    
    \begin{figure}[hbtp]
        \centering
        \includegraphics[height=1in]{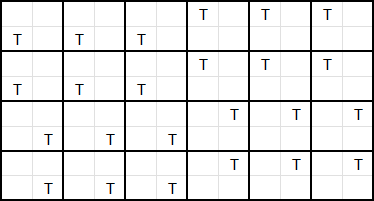}
        \caption{An $8 \times 12$ Shuriken arrangement, one of two valid tree configurations for a (2, 3)-tree $8 \times 12$ Parks puzzle}
        \label{fig:8x12config}
    \end{figure}

    To show that the two Shuriken arrangements are the only two valid tree configurations for the $(c,r)$-tree $4c \times 4r$ puzzle, we continue the reasoning introduced to show that $4c \times 4r$ is the minimal size and  break the grid up into $2 \times 2$ boxes as shown in Figures \ref{fig:8x8config} and \ref{fig:8x12config}. We need to put in $4cr$ trees ($c$ trees in each of the $4r$ columns), which is exactly equal to the number of boxes, $(2c)(2r)$, so by the pigeonhole principle we need to put exactly 1 tree in each $2 \times 2$ box.

    Note that if we put a tree into the right column of a $2 \times 2$ box, then any box to the right of it must have its tree in the right column (see the top half of Figure \ref{fig:8x8config} and the bottom half of Figure \ref{fig:8x12config}). Because the 2nd column must have $c$ trees in it, there are $c$ boxes in the first 2 columns of the grid that have the tree on the right side, and this forces all the boxes to the right of those to have the tree on the right as well, so at least $2rc$ boxes have trees on the right. Similarly, the boxes that have trees in the 2nd column from the right force the boxes left of them to have trees in the left columns (see bottom half of Figure \ref{fig:8x8config} and top half of Figure \ref{fig:8x12config}). Together, this forces exactly $2rc$ boxes to have a tree in the left column and $2rc$ boxes to have a tree in the right column. 
    
    The reasoning above implies that in a single row of $2r$ boxes, either all the boxes in that row have trees on the right or all of them have trees on the left. Because each of the two rows within a row of $2r$ boxes has $r$ trees, half of these boxes will have the tree in the top and half have the tree in the bottom. Thus we see that half of the rows of boxes ($c$, to be precise) contain boxes of type A or B from Figure \ref{fig:4boxes} (in equal amounts) and the other $c$ rows contain boxes of type C or D (again, in equal amounts). Applying this reasoning to columns, we see that $c$ columns of boxes have trees in the top of all the boxes (i.e. have boxes of type A or C, in equal amounts) and the other $c$ columns of boxes have trees in the bottom of all the boxes (i.e. have boxes of type B or D, in equal amounts).

    \begin{figure}[hbtp]
        \centering
        \includegraphics[height=0.75in]{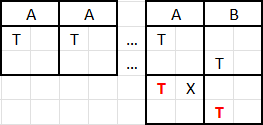}
        \caption{Junction where a horizontal series of Type A boxes switches to a Type B box (only the boxes in the top row are labelled for convenience)}
        \label{fig:JunctionAB}
    \end{figure}
    
    Suppose we go through a row of boxes from left to right and the row starts with a box of type A (see top row of Figure \ref{fig:JunctionAB}). Then eventually (at most halfway across) we will encounter a box of type B, and let's see what happens at this junction. The column of boxes where the box of type A is located only contains boxes of type A or C, so the box directly below the one of type A (if it exists) must be of type A (it can't be of type C because the tree in that one would be adjacent to the tree in the box of type B - note the X in Figure \ref{fig:JunctionAB}). That means the next row in Figure \ref{fig:JunctionAB} (if there is one) must also contain boxes of type A or B. Furthermore, the box below the one of type B must also be of type B, since it is in a row of Type A or B boxes and in a column of Type B or D boxes. Thus, the red \textbf{\color{red}T}'s in Figure \ref{fig:JunctionAB} are forced. Now the same reasoning applies to the next row, forcing any row below that one to have the same junction right below, and so on, meaning all of these $c$ rows of boxes of type A or B must be together, in the bottom of the grid. 

    \begin{figure}[hbtp]
        \centering
        \includegraphics[height=0.75in]{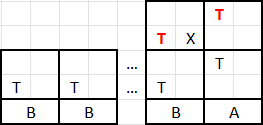}
        \caption{Junction where a horizontal series of Type B boxes switches to a Type A box (only the boxes in the bottom row are labelled for convenience)}
        \label{fig:JunctionBA}
    \end{figure}
    
    Conversely, as Figure \ref{fig:JunctionBA} shows, if a row of boxes of type A or B started with a box of type B on the left and eventually was followed by a box of type A, then at that junction, the box above the one of type B (if any) must be another of type B, and in this case all of these $c$ rows of boxes of type A or B must be together, in the \textit{top} half of the grid. It follows that the same is true of the rows that have the other two types of boxes (C and D) - they are just on the opposite side of the grid from the rows that have boxes of type A or B. Similarly, the columns that have boxes of type A or C must be together, either in the left or the right half of the grid, and this implies that in each row of boxes, the switch between A and B happens in the middle, so each quadrant of the grid must be made up of the same type of $2 \times 2$ box. 
    
    To summarize, if a Type A box exists in the left of a row of boxes, then all the rows of type A or B boxes are together on the bottom of the grid (this is shown in the top of Figure \ref{fig:12shuriken} and in Figure \ref{fig:8x8config}), and if a Type B box is on the left of a row, then the rows of type A or B are at the top of the grid (see the bottom of Figure \ref{fig:12shuriken} and also Figure \ref{fig:8x12config}), so the only two possible tree configurations for a $(c, r)$-tree $4c \times 4r$ puzzle are the Shuriken arrangements.
\end{proof}

\section{Number of Possible Tree Configurations}
\label{sec:NumTreeConfigs}
In this section, we make a few remarks about counting the total number of valid puzzles with fixed parameters that may be given to a human solver; in general, this is hard, and even the simpler problem of finding the number of valid tree configurations for puzzles is an open problem. We will not distinguish between two puzzles that are the same up to a permutation of their colors. There is no known fast algorithm for finding the number of $(c, r)$-tree puzzles with contiguous parks. 

We have been able to compute experimentally (via a SageMath program) that there are exactly 5880 1-tree $4 \times 4$ puzzles with \textit{contiguous} parks that have only the solution that is shown in the left (or equivalently right) side of Figure \ref{fig:1shuriken}, so there is a total of 11760 1-tree puzzles with unique solution. 

If the parks can be disconnected, i.e. \textit{non-contiguous}, then it is easy to count the $4 \times 4$ 1-tree puzzles with unique solution combinatorially. The puzzles that have only the left solution in Figure \ref{fig:1shuriken} may be assumed to have Park 0 at the tree in the top row, Park 1 at the tree in the next row, Park 2 at the next tree, and Park 3 at the last row. The total number of such puzzles is $4^{12}$ (since we can assign any number 0-3 to the rest of the squares). Now such puzzles have two solutions (i.e. also have the right side of Figure \ref{fig:1shuriken} as a solution) if and only if the park numbers assigned to the squares corresponding to trees in the right side of Figure \ref{fig:1shuriken} are all distinct. Thus there are $4^{12} - 4!4^8 = 15,204,352$ (possibly non-contiguous) puzzles that have only the left side of Figure \ref{fig:1shuriken} as a solution, for a total of $2\cdot 4^{12} - 4!4^8 = 31,981,568$ puzzles with at least 1 solution and $2(4^{12} - 4!4^8) = 30,408,704$ puzzles with exactly 1 solution. Note that only 11,760, or about 0.04\% of those, have contiguous parks.

The total number of $(c, r)$-tree puzzles (for any $c$ and any $r$) that may have 0, 1, or more than one solution and may have \textit{non-contiguous} parks is given by the Stirling numbers of the second kind $S(n, r) = {n \brace r}$. Recall that ${n \brace r}$ is the number of ways to partition a set of size $n$ into $r$ nonempty subsets, so ${mn \brace m}$ is precisely the total number of $m \times n$ puzzles. See, e.g. \cite{ConcreteMath} for more information on these numbers. The problem of counting the number of puzzles with at least one solution, or a unique solution, seems harder than counting all the possible tree arrangements that can occur in a Parks puzzle, so we now turn to counting the number of possible tree arrangements.



Table \ref{tab:1x1num_configs} below lists the number $c_1(n) = c_{(1, 1)}(n, n)$ of valid tree configurations for the 1-tree $n \times n$ puzzles where $4 \le n \le 13$. The sequence $\{c_1(n)\}$ is the {OEIS sequence A002464} \cite{oeisA002464} (the reader will note that the $4 \times 4$ case is already proven in Theorem \ref{thm:smallest_size}).

\begin{table}[hbtp]
    \centering
    \begin{tabular}{c|c}
    Size ($n \times n$) & $c_1(n)$\\
    $4 \times 4 $ & 2\\
    $5 \times 5 $ & 14\\
    $6 \times 6 $ & 90  \\
    $7 \times 7 $ & 646  \\
    $8 \times 8 $ & 5,242 \\
    $9 \times 9 $ & 47,622  \\
    $10\times 10$ &  479,306  \\
    $11\times 11$ &  5,296,790  \\
    $12\times 12$ &  63,779,034  \\
    $13 \times 13$ &  831,283,558 
    \end{tabular}
    \caption{Number of valid 1-tree $n \times n$ tree configurations}
    \label{tab:1x1num_configs}
\end{table}

As $n$ becomes larger, $c_1(n)$ grows on the order of $n!$. 
The problem of finding these numbers $c_1(n)$ dates back to at least 1887 \cite{Hertz}, when Severin Hertzsprung wrote a letter describing this sequence of numbers, originally thinking of it as either (a) listing the numbers from $1$ to $n$, never listing any two adjacent numbers next to each other, or (b) as the number of ways to arrange pieces on a chess board that act like a combination of a rook and a king, so that no piece can attack another. According to \cite{Riordan}, these numbers satisfy the recurrence relation: 
\begin{align}
    \label{rec_rel}
    c_1(n) &= (n+1)c_1(n-1) - (n-2)c_1(n-2)\\
    &- (n-5)c_1(n-3) + (n-3)c_1(n-4) \notag
\end{align} An explicit formula for $c_1(n)$ and much more general sequences is given in \cite{AbramsonMoser}:
\begin{equation}
    \label{eqn:explicit}
    c_1(n) = n! + \sum_{k=1}^{n-1} (-1)^k(n-k)!\sum_{r=1}^k 2^r{n-k \choose r}{k-1 \choose r-1}
\end{equation}
A more accessible explanation of \eqref{eqn:explicit} can be found in the educational video lecture by Another Roof \cite{lifelong_math_obsession}. See \cite{Bagno} for more information on the sequence $\{c_1(n)\}$ and \cite{Rukavicka} and \cite{Tauraso} for some interesting related applications.

In general, given $c$ and $r$, we can consider the sequence $\{c_{(c, r)}(m, n)\}$ which counts the number of all possible $m \times n$ size $(c, r)$-tree Parks puzzle tree configurations. More precisely, note that any valid size $m \times n$ of a $(c, r)$-tree puzzle has $n = rm/c$ as an integer, so $m$ must be divisible by $c' = c/\gcd(r, c)$. If we set $m = c'i$, then $n = r'i$ with $r' = r/\gcd(r, c)$, so the sequence becomes $\{c_{(c, r)}(c'i, r'i)\}_{i = 0}^{\infty}$. This is a doubly-infinite generalization of the $c = r = 1$ case, which gives the OEIS sequence A002464. Presumably, the correct generalization would have $c_{(c, r)}(0, 0) = 1$ since $c_1(0) = 1$ makes the recursion \eqref{rec_rel} and explicit formula \eqref{eqn:explicit} work; the terms $c_{(c, r)}(c'i, r'i)$ are 0 when $0 < c'i < 4c$ (i.e. $0 < i < \gcd(c, r)$); and $c_{(c, r)}(4c, 4r) = 2$, according to Theorem \ref{thm:smallest_size}.

Experimentally, we have been able to find the number $c_{(c, r)}(m, n)$ (with $n = rm/c$) for only very small $c$ and $r$ and only the first few possible sizes of puzzle larger than $4c \times 4r$, as shown in Table \ref{tab:num_configs}. With just our limited computations, we can see that for a given fixed $c$ and $r$, not both equal to 1, the corresponding sequence of possible tree configurations is currently not found in the OEIS; in fact, many of the numbers in this table are not in the OEIS at all. We hope to explore the asymptotics of the growth of $\{c_{(c, r)}(m, n)\}$ and/or find a recurrence relation similar to \eqref{rec_rel} or an explicit formula similar to \eqref{eqn:explicit} in future papers.

\begin{table}[hbtp]
    \centering
    \begin{tabular}{|c|c||c|c|}
    \hline
    \textbf{Size} & $c_{(2, 2)}(n, n)$ & \textbf{Size} & $c_{(1, 2)}(n, 2n)$\\
    \hline
    $9 \times 9$ & 664          & $5 \times 10$ & 282\\
    \hline
    $10 \times 10$ & 146,510    & $6 \times 12$ & 25,922\\
    \hline
    $11 \times 11$ & 31,197,434 & $7 \times 14$ & 2,755,942\\
    \hline
    $12 \times 12$ & 6,798,881,226 & $8 \times 16$ & 363,371,498\\
    \hline
    \textbf{Size} & $c_{(3, 3)}(n, n)$ & \textbf{Size} & $c_{1, 3}(n, 3n)$ \\
    \hline
    $13 \times 13$ & 42,732 & $5 \times 15$ & 7,912\\
    \hline
    $14 \times 14$ & 596,777,194 & $6 \times 18$ & 12,298,122  \\
    \hline
    \textbf{Size} & $c_{(4, 4)}(n, n)$ & $7 \times 21$ & 22,221,905,440\\
    \hline
    $17 \times 17$ & 3,758,020 & \textbf{Size} & $c_{(1, 4)}(n, 4n)$ \\
    \hline
    \textbf{Size} & $c_{(2, 3)}(2n, 3n)$ & $5 \times 20$ & 261,440\\
    \hline
    $10 \times 15$ & 151,930,062 & $6 \times 24$ & 7,373,997,162\\
    \hline
    \textbf{Size}  & $c_{(2, 4)}(n, 2n)$ & \textbf{Size} & $c_{(1, 5)}(n, 5n)$\\
    \hline
    $9 \times 18$ & 1,026,750 & $5 \times 25$ & 9,525,432\\
    \hline
    \end{tabular}
    \caption{Number of valid $(c, r)$-tree configurations for small $c$ and $r$}
    \label{tab:num_configs}
\end{table}  



\section{Conclusion}
We have shown that the $(c,r)$-tree Parks Puzzle is NP-complete for each $c$ and $r$, thus producing a doubly-infinite family of puzzles, each of which is NP-complete. We have also shown that there are interesting related sequences $\{c_{(c, r)}(m, n)\}$ associated with these puzzles, describing the number of possible tree arrangements for the possible sizes $m \times n$ of $(c, r)$-tree Parks puzzles. These sequences currently do not appear in the OEIS, and we have computed a few terms of the sequences. We hope to explore these sequences in more depth in a future paper; in particular, we hope to find recursive formulas such as \eqref{rec_rel} and/or explicit formulas such as \eqref{eqn:explicit}.

\section{Acknowledgments}
The inspiration for this work was the Logic Games app, developed by Andrea Sabbatini in 2011. We are grateful to Dr. Koorosh Firouzbakht for reading the manuscript and suggestions for improvement. We are also grateful to Dr. Fariba Khoshnasib-Zeinabad for many helpful discussions. 



\ifCLASSOPTIONcaptionsoff
  \newpage
\fi



%

\bibliographystyle{IEEEtran}
\bibliography{main_ieee}

%








\end{document}